\documentclass[10pt]{article}

\usepackage{amsmath}
\usepackage{amsfonts}
\usepackage{amssymb}
\usepackage{amsthm}
\usepackage{stmaryrd} 
\usepackage{braket}
\usepackage{empheq}
\usepackage{graphicx}
\usepackage{subfigure}
\usepackage[round]{natbib}
\usepackage{footmisc}
\usepackage{appendix}
\usepackage{epsf}
\usepackage{epsfig}
\usepackage{tikz,tikz-3dplot}
\usetikzlibrary{decorations.markings,decorations.pathmorphing}
\usepackage{caption}
\usepackage{pgfplots}
\usepackage[utf8]{inputenc}
\usepackage[T1]{fontenc}
\usepackage[english]{babel}
\usepackage{sectsty}
\usepackage{hyperref}
\usepackage{authblk}

\subsubsectionfont{\normalfont\itshape}
\usepackage{fullpage}

\renewcommand{\leq}{\leqslant}

\DeclareMathOperator{\tr}{tr}

\theoremstyle{plain}
\newtheorem{theorem}{Theorem}[section]

\newtheorem*{theorem*}{Theorem}

\theoremstyle{definition}

\newtheorem{remark}[theorem]{Remark}

\title{An interpretation-independent formulation\\of the measurement problem}
\date{}
\author[1 2]{\textsc{Antoine Soulas\thanks{\texttt{antoine.soulas@univie.ac.at}} }}
\affil[1]{Quantum Optics, Quantum Nanophysics and Quantum Information, Faculty of Physics, University of Vienna, Vienna, Austria \vspace{0.2cm}} 
\affil[2]{Institute for Quantum Optics and Quantum Information (IQOQI), Austrian Academy of Sciences, Vienna, Austria \vspace{0.2cm}}

\begin{document}
\maketitle

\vspace{-1.5cm}

\begin{center}
  \rule{6cm}{1pt}
\end{center}

\abstract{In this paper, we do not try to solve the measurement problem, but rather to properly formulate it. One of the reasons why it still lacks a precise, agreed definition is that the problem may take very different aspects depending on the interpretation of quantum mechanics embraced. Inspired by the methodology of theory-independent results like Bell’s theorem, we propose to identify the common root of the puzzle in an interpretation-independent way \textit{i.e}. as a property of the empirical statistics only, before deriving its philosophical consequences. The key observation is that quantum matter can not be described by a Kolmogorovian probabilistic theory. Arguing that the Kolmogorov axioms of probability theory are the postulates of epistemic uncertainty leads us to reformulate the measurement problem as the impossibility to build an ontology independent of epistemology for quantum matter. Said differently, there exists no God's-eye view on quantum systems. Although it is meaningless to solve the measurement problem as defined in this way, insofar as it is a feature of the universe's statistics, such a formulation may on the other hand bring side benefits. In particular, we argue that it allows to: (i) shed a new light on the variety of interpretations; (ii) propose a fundamental reason why quantum mechanics and general relativity are so incompatible, not relying on purely mathematical or technical arguments; (iii) guide the quest for quantum gravity.}

\vspace{0.4cm}
\hrule

\tableofcontents

\section{Introduction} \label{sec:intro}

Despite a century of controversy, the quantum measurement problem remains as far as it gets from a consensual solution within the quantum foundations community; see for instance \cite{schlosshauer2011elegance}'s compilation of the views of 17 physicists and philosophers on the topic, or the broader surveys \citep{schlosshauer2013snapshot, sivasundaram2016surveying, shmahal2025does}. The theory of decoherence and several no-go theorems have certainly provided some important new insights, but have not ended the debates — quite the contrary. An important difficulty is that the problem still lacks a precise, universally accepted definition that would constitute a starting point from which to discuss. Indeed, authors generally don't spend much time defining it, but rather quickly embark in an explanation of their own solution or related thoughts. Moreover, the puzzle may manifest itself in various ways depending on the interpretation of quantum mechanics (QM) embraced, leading to divergent opinions, not only on what can be considered as an acceptable answer, but also on the very question to address \citep{rovelli1996relational, schlosshauer2011elegance, muller2023six}. This lead \cite{feynman1982simulating} to admit: ‘I cannot define the real problem, therefore I suspect there's no real problem, but I'm not sure there's no real problem’. That's why, before even trying to solve the measurement problem (whatever this might mean), it seems reasonable to first properly formulate what the issue is, and to do so in an interpretation-independent way.

Upon reviewing the dedicated literature, it turns out that, although the proposed solutions may differ widely, almost all definitions of the problem can roughly be cast into two categories. The most frequently encountered one expresses it as the conflict between unitary evolution and collapse \citep{everett1957relative, peres1986quantum, albert2009quantum, griffiths2013consistent, bacciagaluppi2013measurement, zwirn2014decoherence, gisin2018collapse, barrett2019quantum, mermin2022there, cuffaro2023measurement}. This can also include authors like \cite{bell1990against} and \cite{brukner2017quantum} relating the issue to the ill-defined notion of ‘measurement’, as it is clear that they too have in mind the absence, in textbook QM, of a clear rule prescribing when to apply either one or the other postulate. The predominance of this definition in the literature probably owes to the major historical influence of \cite{von1932mathematical}'s book in shaping standard paradigms for quantum physics, including his classic presentation of the measurement problem (section VI.1. in the reference). However, this century-old textbook formulation can not be taken as a universal definition. Indeed, most of the currently existing interpretations of QM do not even have a notion of collapse in the first place: had physicists first formulated quantum theory in the terms of Bohmian mechanics or many-worlds for instance, the problem would never have been phrased this way. Furthermore, QM is a probabilistic theory, hence it is absolutely normal that it involves an update of the probabilities when new information is acquired. The coexistence of such a discontinuous epistemic update together with a law governing the evolution of the probabilities is a natural feature of any dynamical probability theory, as was already emphasised by \cite{brukner2017quantum, mermin2022there} and many others.

A second category of authors express the problem as the puzzle that only one outcome (or possibly any outcome at all) occurs during a quantum measurement \citep{dieks1989resolution, saunders1994problem, leggett2005quantum, healey2012quantum, bub2020defense, vaidman2022many, adlam2023we, karlsson2025quantum, goldstein2025bohmian}. A closely related variant, found for example in \cite{zeh1970interpretation}, \cite{fine1993measurement}, \cite{bassi2013models} or \cite{myrvold2022philosophical}, reckons that the problem arises when considering superpositions of macroscopic states; here too, the mystery seems to lie in some apparent conflict with the uniqueness of experiences at our scale. This formulation, though, can not serve as a satisfying definition either. Again, since QM is a probabilistic theory, it is normal that it involves a variety of potential outcomes, although \emph{of course} only one is to be observed by an experimentalist. In fact, if one day a physicist were to obtain several outcomes in a single experiment, or none, \emph{this} would be a much bigger mystery than what the measurement problem already is! 

At least three presentations stand out of this classification. \cite{rovelli1996relational}, aware that the measurement problem ‘assumes different faces within different interpretations’, explains that ‘a complete description of the problem can only be based on a survey of already proposed solutions’. Such a survey allows him to conclude, beyond any specific choice of interpretation, that there exists indeed an unsolved problem. He further claims that the issue arises from a misplaced use of observer-independent physical states in quantum theory, and can thus be solved by embracing a relational view on quantum systems. Despite being more accurate than the previous cited works, we will in the sequel depart from Rovelli's stance on two important aspects: first, we will argue that the problem \emph{can} be described without referring to how the different interpretations encounter it; and second, that the measurement problem is a philosophical astonishment entailed by some empirical fact about the universe, hence it is hopeless to ‘solve’ it in the usual sense of the term.

The most interpretation-independent characterisation of the problem we could find is due to \cite{maudlin1995three}, who proposed an interesting way to frame it as an incompatibility between three assumptions\footnote{In its first variant, those three assumptions are basically: (i) $\ket{\psi}$ is a complete representation of a system; (ii) the Schrödinger equation is universal; (iii) measurements have definite outcomes.}. His approach is precious insofar as it really allows to take a step back and compare or classify the different interpretations (although we will in the sequel disagree with Maudlin's conclusion that some interpretations manage to solve the problem, while others fail). One can only regret that it does not provide a \emph{positive} definition of the puzzle, so that the reader might be left with a feeling of not grasping exactly \textit{what} is weird with QM. Yes, we learn that one assertion has to be dropped, but what can we do of this? This possibly explains why, 30 years after the publication of Maudlin's article and although it is well-known in the community, his proposal did not have a major influence in the debate on the measurement problem. Notably, most of the abovementionned papers do not even cite Maudlin's definition.

Finally, \cite{wallace2016philosophy} provides an analysis of the QM formalism and concludes that ‘we are led to paradox if we think of [the quantum state] as either representational or probabilistic’, two notions that are normally well-distinguished in classical mechanics. In short, $\ket{\psi}$ can not be representational because it is unclear what a superposition of eigenstates should represent, but it can not either be treated as probabilistic because the statistics in an interference experiment does not fit with ‘the basics of the probabilistic conception’. This proposal shares some similarities with the approach that we are going to pursue in this paper, especially an effort to exhibit the patterns of quantum probabilities that distinguish them from standard probability theory. We will try to go further, though, in order to make the idea more compelling, in particular without relying on the concept of quantum state nor on any specific feature of the standard formulation of QM. 

The strength of Bell's theorem and other theory-independent no-go results resides in the fact that they would still hold and remain relevant even though QM were to be superseded tomorrow by a new theory. They tell us something about the world that physics will never forget. To achieve this, they stick to the experimental input \textit{i.e.} they merely constrain which kinds of probabilistic theories are compatible with the observed empirical statistics. In this spirit, our strategy to formulate the measurement problem in an interpretation-independent way will be to express the latter purely as a property of the empirical statistics. In section \ref{sec:kolmogorov}, we identify this property and show that it constitues the common root of the main weirdnesses of QM. We then discuss its philosophical implications in section \ref{sec:philo}, with a special focus on its relation with the notion of observer in section \ref{sec:observers}. Finally, we motivate the interest of such an interpretation-independent formulation by showing two of its possible applications: it allows to compare under a new light the different interpretations (section \ref{sec:interpretations}), and to propose a fundamental, non-technical reason why QM and general relativity (GR) are so irreconcilable (section \ref{sec:QG}).

\section{No God's-eye view on quantum matter} \label{sec:formulation}
\subsection{Quantum statistics are non-Kolmogorovian} \label{sec:kolmogorov}

Classical probabilistic phenomena are usually well described by the Kolmogorov axioms \citep{kolmogoroff1933grundbegriffe}. Let's recap them briefly. The basic mathematical landscape of probability theory is that of a measured space, that is a set $\Omega$ equipped with a $\sigma$-algebra $\mathcal{T}$ (a subset of $\mathcal{P}(\Omega)$ containing  $\Omega$ and closed under complementation and countable union) plus a map $p : \mathcal{T} \to [0,1]$, called probability measure, that satisfies the following two axioms:
\begin{itemize}
\item \textbf{Axiom 1}. \quad $p(\Omega)=1$.
\item \textbf{Axiom 2} ($\sigma$-additivity). \quad For all $(E_n)_{n\in \mathbb{N}} \in \mathcal{T}^\mathbb{N}$ disjoint, 
\[ p\left(\bigcup_{n=0}^\infty E_n\right) = \sum_{n=0}^\infty p(E_n). \]
\end{itemize}
On top of that are usually defined measurable functions from $\Omega$ to $\mathbb{R}$ called random variables. The measure $p$ allows to associate a probability for any random variable $A$ to take a particular value $a \in \mathbb{R}$ by computing $p(A^{-1}(a))$. An immediate consequence of Kolmogorov's axioms is the \emph{law of total probability} (LTP): for any pair of random variables $A$ and $B$, we have:
\begin{align} p(A=a) = \sum_b p(A=a \mid B=b) p(B=b). \label{eq:LTP} \end{align}
This equation simply follows from: $p(A=a) = p\big(\cup_b \{ A=a, B=b \}\big) =  \sum_b p(A=a , B=b)$, which translates the possibility of conditioning on all the values that $B$ can take, and to obtain the probabilities of $A$ as a marginal.

In this paper, we will specifically be concerned with \emph{physical} probabilistic theories. By ‘Kolmogorovian theory’, we will therefore mean a probability space as defined above, equipped with physically meaningful random variables, one for each quantity that can in principle be measured in the experimental situation that the theory models. For physical theories describing the mechanics of particles, this should at least include the position and momentum variables $X(t)$ and $P(t)$ at different times. Given a set of probabilities or empirical statistics, we will say that they are Kolmogorovian if there exists a Kolmogorovian theory reproducing them, and non-Kolmogorovian otherwise\footnote{Interestingly, in the case of probabilistic theories defined by a von Neumann algebra, Kolmogorovianity is equivalent to the commutativity of the algebra \citep{redei2007quantum}.}.

Now, \emph{the statistics observed in quantum experiments are non-Kolmogorovian}\footnote{It should be stressed that the statistics collected by an experimentalist after many runs of the \emph{exact same} experiment (without parameter choices) are always Kolmogorovian, because this data automatically defines a joint probability distribution for the set of random variables measured in the experiment, in agreement (by definition) with the empirical results \cite[Section 3]{szabo2007einstein}. However, the existence of a Kolmogorovian model is not guaranteed anymore if \emph{not all} variables are measured in each run.}. Indeed:
\begin{enumerate}
\item they do not satisfy the LTP, as any quantum interference confirms. For example, in a double-slit setup, one can not condition on the particle going through one slit or the other when computing the probabilities, because:
 \[ p(x) \neq p(x | \text{right}) p(\text{right}) + p(x | \text{left}) p(\text{left}).\]
In this respect, the first implementation of a quantum interference experiment (perhaps \cite{taylor1909interference}'s double-slit experiment?) was a landmark in human's history because, for the first time, it was realised that we live in a non-Kolmogorovian universe;
\item they violate the \cite{leggett1985quantum} inequality (see \cite{emary2014leggett} for a theoretical and experimental review);
\item they violate Bell-type inequalities. 
\end{enumerate}
In the following sections \ref{sec:LTP} and \ref{sec:Boole vs Bell}, we elaborate more on these two aspects of non-Kolmogorovianity that are the violation of the LTP and of the Bell inequality.

\subsubsection{The LTP and quantum (de)coherence} \label{sec:LTP}

Although non-Kolmogorovianity is a theory-independent fact about quantum matter's empirical statistics, QM correctly predicts them, so the deviation from the LTP can also be traced back in the QM formalism. When written in an orthonormal basis $\mathcal{B}=(\ket{i})_i$ (whose elements we call ‘potentialities’, as well as their corresponding events), it is well-known that the diagonal part of a density matrix accounts for the LTP when conditioning on the potentialities of $\mathcal{B}$, while the off-diagonal terms are responsible for the quantum interferences \textit{i.e.}~ the depart from the LTP. In a nutshell, this is because the probability to obtain an outcome $x$ while being in state $\rho$ is, according to the Born rule\footnote{Quantum coherence could even be defined as the ability for a physical system to deviate from the LTP. For instance, one can show that:

\[ \left\lvert \tr( \rho \Pi_{\hat{A}=a} ) -  \sum_b \tr( \Pi_{\hat{B}=b} \rho \Pi_{\hat{B}=b} \Pi_{\hat{A}=a} ) \right\rvert \leq \mathrm{rank}( \Pi_{\hat{A}=a}) \; C_{\mathcal{B}_{\hat{B}}}(\rho), \]
where $\Pi_{\hat{A}=a}$ and $\Pi_{\hat{B}=b}$ are eigenprojectors of $\hat{A}$ and $\hat{B}$ associated to the eigenvalues $a$ and $b$, and $C_{\mathcal{B}_{\hat{B}}}(\rho)$ denotes the 2-norm of coherence of $\rho$ in the eigenbasis $\mathcal{B}_{\hat{B}}$ of $\hat{B}$ \citep{soulas2025quantifying}.}:

 \[ \tr(\rho \! \ket{x}\bra{x}) = \sum_{i,j=1}^n \rho_{ij} \braket{j \vert x}  \braket{x \vert i} = \sum_{i=1}^n \underbrace{\rho_{ii}  \lvert \braket{x \vert i} \rvert^2}_{p(i) p(x \mid  i)}  +  \sum_{1\leq i < j \leq n}  \underbrace{2 \mathrm{Re}( \rho_{ij}  \braket{j \vert x} \braket{x \vert i })}_{\text{interferences}}. \]
Consequently, \emph{decoherence in a basis $\mathcal{B}$ can be understood as the process after which the LTP (approximately) holds} when conditioning on the potentialities of $\mathcal{B}$. 

The crucial importance of the LTP in QM has long been recognised. In fact, the latter is ubiquitous in the quantum foundations literature, where it is often used to translate mathematically an idea of objectivity or, more precisely, of variables having absolute values independently of being observed. For instance, the very definition of a hidden variable is an unknown quantity $\lambda \in \Lambda$ having probability distribution $\mu(\mathrm{d}\lambda)$, with respect to which any observable $A$ can be conditioned so as to satisfy the LTP:
\[ p(A) = \int_\Lambda p(A \vert \lambda) \mu(\mathrm{d}\lambda), \]
where $p(A \vert \lambda)$ is less indeterministic than $p(A)$, the extreme case being a Dirac distribution $\delta(A=f(\lambda))$. This property is already present in Bell's seminal papers \cite[equation (2)]{bell1964einstein} \cite[equation (12)]{bell1981bertlmann}, as well as in the more modern formulation of Bell's theorem in terms of causal models \cite[equation (13)]{wood2015lesson}. It is also at the basis of the ontological models program \cite[Definition 1]{harrigan2010einstein} \cite[equation (7)]{leifer2014quantum}. Similarly, the assumption of Macroscopic Realism, used to derive the Leggett-Garg inequalities, translates mathematically as a LTP \cite[equation (1)]{leggett1985quantum} and the assumption of Absoluteness of Observed Events, on which relies the Local Friendliness inequality \citep{bong2020strong}, is among other things the assumption of a LTP when conditioning on the events observed by the friends. Furthermore, the very motivation for introducing the decoherence functional in the consistent histories framework is to quantify the deviation from the LTP given a certain set of successive events \cite[equation (2.17)]{griffiths1984consistent}. More recently, \cite[equation (2)]{di2021stable} have remarked that, in relational QM, the possible violation of the LTP can be understood as a mark of the relativity of facts, while QBists have been interested in re-expressing the Born rule in order to study how closely it can resemble the LTP \cite[equation (12)]{fuchs2023qbism}.

\begin{remark}
\textbf{On Schrödinger's cat.} Rigorously speaking, the only thing that can be said to describe Schrödinger's cat puzzling situation is: ‘the statistics of a measurement of the cat in a complementary basis would not satisfy the LTP when conditioning on the cat being alive or dead’. Asserting that the cat is both alive and dead is clearly a misleading abuse, given that QM unsurprisingly predicts a $\frac{1}{2}$ probability to observe the cat either alive or dead when opening the box. Admittedly, though, this widespread expression is of course not totally ungrounded. Indeed, the LTP \eqref{eq:LTP} can be interpreted as (or at least is compatible with) the fact that one can sum independently the contribution of each potentiality $\{B=b\}$, even though $B$ is not observed, because one—and exactly one—occur each time the experiment is reproduced. Alternatively, a violation of the LTP may give a certain taste of having all the different potentialities apparently contributing to the statistics, in each single run.
\end{remark}

\begin{remark}
\textbf{Why is classical optics a wave theory?} The fundamental reason why light, ultimately made of photons, apparently behaves as a wave is the following. Since photons have no electric charge, they interact and decohere very little\footnote{See for instance the \cite{hanbury1979test} effect, employing quantum superpositions of different photon paths on astronomical scales in order to measure the size of stars!}. As bosons, they are in addition not subject to Pauli's exclusion principle. Consequently, a ‘classical’ optics experiment is best understood as many single-photon experiments in parallel, in which each photon violates the LTP. Because of this, all the different potentialities of each photon (in this case: the different optical trajectories) seem to contribute to the observed statistics, leading after application of the law of large numbers to an emergent classical wave behaviour. Besides, the supposedly ‘classical’ fact that 2  consecutive polarisers oriented as $\theta_1 = 0^{\circ}$ and $\theta_2 = 90^{\circ}$ does not transmit a laser beam, whereas adding a supplementary polariser—a device than can only \emph{absorb} energy—between them with $\theta_3 = 45^{\circ}$ \emph{does} let some photons passing through, is another illustration of the violation of the LTP in optics.
\end{remark}

\subsubsection{Boole and Bell} \label{sec:Boole vs Bell}

As a preliminary remark for this section, note that any Kolmogorovian theory trivially admits a hidden variable model: the sample $\omega \in \Omega$ is obviously a hidden variable, so that any random variable $A$ possesses in fact a definite value $A(\omega)$ in each trial. This is the core intuition of Kolmogorovian probabilities. 

Now, although he obviously did not use these terms, \cite{boole1862theory} remarked more than 150 years ago that Kolmogorovian probabilities are constrained to form a polytope. He called this set of inequalities ‘conditions of possible experience’ \citep{pitowsky1994george}. One could thus retrospectively attribute to Boole the following theorem (even though he certainly did not prove this particular statement) whose proof is straightforward and reproduced in the annex \ref{section:proof Boole}.

\begin{theorem*}[‘Boole’]
Let $A_1, A_2, B_1, B_2$ be four $\pm1$-valued random variables. Then:
\[ \lvert \langle A_1 B_1 \rangle + \langle A_1 B_2 \rangle + \langle A_2 B_1 \rangle - \langle A_2 B_2 \rangle \rvert \leq 2 \]
\end{theorem*}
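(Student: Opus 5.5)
The argument works pointwise on the sample space and then integrates. The key point is that all four random variables live on the same probability space $(\Omega,\mathcal{T},p)$, so every correlator is an expectation with respect to the single measure $p$. For instance,
\[ \langle A_1B_1\rangle=\int_\Omega A_1(\omega)B_1(\omega)\,p(\mathrm{d}\omega). \]
This is exactly the hidden-variable intuition recalled just before the statement: each sample $\omega$ assigns definite values to all four variables at once. By linearity of the integral, the left-hand side is $\lvert\langle S\rangle\rvert$, where
\[ S(\omega)=A_1(\omega)\bigl(B_1(\omega)+B_2(\omega)\bigr)+A_2(\omega)\bigl(B_1(\omega)-B_2(\omega)\bigr). \]
Here $S$ is measurable, as a polynomial in measurable functions, and it is bounded, so its expectation is well defined.

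Next I bound $S$ pointwise. Fix $\omega$. Since $B_1(\omega),B_2(\omega)\in\{\pm1\}$, exactly one of $B_1+B_2$ and $B_1-B_2$ vanishes at $\omega$, and the other equals $\pm2$. Using $\lvert A_i(\omega)\rvert=1$, this gives $\lvert S(\omega)\rvert=2$ for every $\omega$, so in particular $\lvert S(\omega)\rvert\leq 2$.

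Finally, I integrate. By the triangle inequality for integrals and Axiom 1,
\[ \lvert\langle S\rangle\rvert\leq\int_\Omega\lvert S\rvert\,\mathrm{d}p\leq 2\,p(\Omega)=2. \]
Equivalently, one can use the LTP: conditioning on the joint value of $(A_1,A_2,B_1,B_2)$ writes $\langle S\rangle$ as a convex combination of the $16$ deterministic values of $S$, each of which lies in $\{\pm2\}$.

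No step is a real obstacle. The only point worth making explicit is the one that carries the conceptual content: the existence of a joint distribution for all four variables on one $\Omega$. This is precisely what lets us combine the four correlators under a single integral, and it is what fails in Bell scenarios where not all variables are measured in the same run.
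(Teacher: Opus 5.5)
Your proof is correct and follows essentially the paper's route: regroup the sum as $A_1(B_1+B_2)+A_2(B_1-B_2)$ on the common sample space, bound it pointwise using $\lvert A_i\rvert=1$ and the $\pm1$ values of $B_1,B_2$, and then integrate against $p$. The only cosmetic difference is in the pointwise step. You get $\lvert S(\omega)\rvert=2$ by a case analysis, whereas the paper applies the triangle inequality term by term and then uses $\lvert x+y\rvert+\lvert x-y\rvert\leq 2\max(\lvert x\rvert,\lvert y\rvert)$.
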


\noindent The reader will have immediately recognised the CHSH inequality. Importantly, though, it does not stem here from an assumption about locality, but only as a basic property of $\pm1$-valued Kolmogorovian random variables.

A century later, \cite{bell1964einstein} (resp. \cite{bell1976theory}) showed that deterministic (resp. factorizable stochastic) hidden variable models generate probabilities which are constrained by some kind of inequality like CHSH, where:
\begin{itemize}
\item ‘deterministic’ means that the hidden variable entirely determines the values of the physical observables \textit{i.e.} $p(A =a \vert \lambda) \in \{0,1\}$ or equivalently $A = f(\lambda)$;
\item ‘factorizable stochastic’ merely assumes that $p( A = a , B = b \vert \lambda ) = p( A = a \vert \lambda ) p( B = b \vert \lambda )$.
\end{itemize}

At that point, it seems natural to ask whether Bell, despite introducing those complex assumptions about hidden variable theories, has proved anything more than the elementary theorem above. In other words, does Bell's theorem rule out a larger class of probabilistic theories candidate to reproduce the quantum statistics than Boole's theorem?

This question has been answered by \cite{fine1982hidden}. He proved that, if $A_1, A_2, B_1, B_2$ are four $\pm1$-valued physical observables, and if we are given a table of statistics (the ‘probabilities of the experiment’) obtained after performing some repeated measurements of all $A_i$, $B_j$ and all pairs $(A_i, B_j)$, the following are equivalent: (i) there exists a deterministic hidden variable model reproducing those statistics; (ii) there exists a factorizable stochastic hidden variable model reproducing those statistics; (iii) those statistics satisfy the CHSH inequality; and (iv) those statistics are Kolmogorovian \textit{i.e.}~there exists a joint probability distribution for the four observables. Consequently, this means that, mathematically speaking, Bell's assumptions are nothing but a very complicated way to re-express Kolmogorovianity. 
At the strict level of the statistical predictions, the probabilistic theories ruled out by Bell's theorem to reproduce quantum physics are exactly those ruled out by Boole's: the Kolmogorovian theories\footnote{At the level of \emph{physical} theories, though, Bell’s theorem tells us something more than Boole’s. Indeed, our theories are not a bare collection of variables, but the latter refer to entities that exist and interact within a spacetime. Bell has shown that quantum non-Kolmogorovian statistics can not even be reproduced by a local mechanism, based for example on a conspiratorial particle carrying to Bob the information of Alice's choice. This is why, to \cite{fine1982hidden} claiming: ‘hidden variables and the Bell inequalities are all about (...) imposing requirements to make well defined precisely those probability distributions for noncommuting observables whose rejection is the very essence of quantum mechanics’, \cite{shimony1984contextual} commented in these words: ‘he has, in my opinion, neglected the diversity of hidden variables theories and the diversity of the aims of different programmes. (...) the results of research on hidden variables theories are too intricate to be characterized accurately by his condensed statements.’}. It also means that Kolmogorovianity is equivalent, probabilistically speaking, to the existence of underlying hidden variables, hence Kolmogorovianity appears again as the mathematical implementation of the idea of an absolute ontology independent of epistemology.

\subsubsection{Formulation of the measurement problem}

As a conclusion of sections \ref{sec:LTP} and \ref{sec:Boole vs Bell}, we now see that non-Kolmogorovianity is associated with some idea of non-absoluteness, and that it constitutes the root of the two main weirdnesses of QM:
\begin{enumerate}
\item superposition/coherence/interferences \textit{via} the violation of the LTP;
\item entanglement/Bell non-locality \textit{via} the violation of Bell-type inequalities.
\end{enumerate}
We therefore propose the following interpretation-independent formulation: \\
\begin{center}
\emph{the measurement problem is the fact that quantum matter\\ cannot be described by a Kolmogorovian theory}.
\end{center}

\noindent \\  In the next section, we examine the philosophical consequences of this definition.

\begin{remark}
\textbf{Why has it been called the ‘measurement’ problem?} The more one thinks about it, the less it seems to have anything to do with measurements... The only explanation we could think of is historical, due to the primacy of the Copenhagen interpretation, whose most obvious weakness lies in the collapse, which was associated with the act of measurement. A better phrasing would certainly be ‘non-Kolmogorovianity problem’ or ‘quantum non-absoluteness’. 
\end{remark}

\subsection{The philosophical significance of non-Kolmogorovianity} \label{sec:philo}

Our key argument can be cast into the standard form of theory-independent no-go results that \cite{shimony1984contextual} called ‘experimental metaphysics’ (see also \cite{cavalcanti2008reality}), albeit in a very simplistic version. Consider the following metaphysical assumption (\textsc{epistemic uncertainty}):

\paragraph{\textsc{(EU)}} All uncertainty is epistemic: even though the actual value of a variable is not known, it still has a single definite value among the possible ones.\\

\noindent Motivated by section \ref{sec:kolmogorov}, we claim that \emph{Kolmogorov axioms are the axioms of epistemic uncertainty}; in other words Kolmogorovian probability theory constitutes the mathematical implementation of \textsc{(EU)}. Experimentally, we observe that the universe’s statistics violate these axioms. Hence we must reject \textsc{(EU)}: this is the main philosophical consequence of the measurement problem.

Let's try to reformulate what the rejection of \textsc{(EU)} entails philosophically. We have already seen in section \ref{sec:kolmogorov}, through considerations on the LTP and on the equivalence between Kolmogorovianity and hidden variables, that Kolmogorov axioms are associated with a notion of ontology independent of epistemology\footnote{The link between our proposed formulation of the measurement problem and the usual textbook presentation (the conflict between unitary evolution and collapse) is the following. Any probabilistic theory naturally involves an epistemic update of the probabilities when new information is acquired. However, in a non-Kolmogorovian theory where ontology can not be clearly separated from epistemology, this supposedly epistemic operation takes a puzzling ontic flavour. It may therefore be tempting to consider it as as genuine physical process, granted with the same status as the time evolution law, but doing so leads to apparently ill-defined conditions of applicability.}. The negation of \textsc{(EU)} can be further rephrased as the impossibility to build such an absolute ontology for quantum matter, supporting \cite{zurek2022emergence}'s intuition: ‘I strongly suspect that the ultimate message of quantum theory is that the separation between what exists and what is known to exist – between the epistemic and the ontic – must be abolished.’ Said differently, there exists no God's-eye view on quantum systems, no ‘view from nowhere’ \citep{nagel1989view, latour2017facing}. By God's-eye view, we mean a description of the entire universe in a well-definite configuration, perhaps inaccessible to actual observers but true for everyone. It depicts ‘all there is’, so that in particular every internal, partial perspective can be derived from it—the paradigmatic example being of course classical physics. 

In the history of science, the fact that the measurement problem was deemed a problem may just be the symptom of a physics that took itself too much for God. Isn't the measurement ‘problem’, indeed, only an issue for physicists too reluctant to abandon the God's-eye view? Not exactly. In response to someone arguing: ‘I have accepted that the universe is non-Kolmogorovian and I feel comfortable with the absence of absolute ontology; at the end of the day, is there any problem?’, two questions can be put forward.
\begin{enumerate}
\item \textbf{The physical problem:} how to explain that we do not experience non-Kolmogorovianity at our scale, to such an extent that no human ever had any reason to doubt of the applicability of Kolmogorov axioms in physics before the advent of QM? This is a central aim of the research on the quantum-to-classical transition; see for instance \cite{gell1993classical, zurek2003decoherence, kofler2007classical} among many others.
\item \textbf{The philosophical problem:} which ontology should we adopt in a non-Kolmogorovian universe? This is the question that every interpretation of QM is trying to address.
\end{enumerate}
The former problem clearly lies outside the scope of this paper. In the following sections \ref{sec:observers} and \ref{sec:interpretations}, we comment on the latter.

\subsubsection{Observers: when the philosophical difficulty worsens} \label{sec:observers}

A growing number of works in the field of quantum foundations focus on introducing the 1st person point of views in the description, as exemplified by the recent surge of interest in Wigner's friend thought experiment \citep{schmid2023review}, by the rise of quantum reference frames \citep{angelo2010physics, giacomini2019quantum}, as well as a myriad of other approaches \citep{fuchs2016participatory, frauchiger2018quantum, mueller2020law, fankhauser2025epistemic}. This should not come as a surprise once it is understood that, in a non-Kolmogorovian universe, ontology can not be independent of epistemology, as argued above. The view from nowhere must be replaced by \textit{in situ} perspectives.

Wigner's friend precisely captures a philosophical tension that makes it harder—although not impossible—to build a satisfying ontology in presence of non-Kolmogorovianity. Ask yourself: what is so special when a human being  (or possibly a cat) is inserted into the description, instead of particle? Manifestly, no one was shocked by the experimental violation of the Local Friendliness inequality when Charlie and Debbie were taken to be single photons \citep{proietti2019experimental, bong2020strong}. So what is more troubling with an actual human being? We propose the following: the philosophical problem becomes even more severe when the system that deviates from Kolmogorovianity is itself an observer, which we define here as \textit{any physical system granted with subjectivity}, because of the tension between: (i)~the multiplicity of potentialities apparently contributing to the statistics when the LTP is violated, and (ii)~the unicity of the subjective experience. \cite{wigner1995remarks} insisted that ‘my friend [should have] the same types of impressions and sensations as I’, in particular she should have unique experiences, so how come the other potentialities still contribute to the statistics when Wigner performs a supermeasurement on her? In a non-Kolmogorovian ontology, even 1st person perceptions can not be absolute.

An important related question is of course to determine which systems can be treated as observers. This is a currently hot debate: \cite{fields2012if} has for exemple proposed to define an observer through its information-processing abilities, when \cite{brukner2021qubits} has argued that ‘qbits are not observers’. \cite{stoica2025makes} has defended the idea that structures alone are insufficient to determine the sentient observers of quantum theory, while \cite{wiseman2023thoughtful} have discussed the validity of the Friendliness assumption, granting a system with ‘thoughts (...) as real as any communicable thought of my own’, for instance in the case of an AI running on a quantum computer.

\section{Applications}
\subsection{Comparing the main interpretations} \label{sec:interpretations}

We have now exposed the difficulties faced by anyone willing to interpret QM \textit{i.e.}~endeavouring to build an ontology compatible with what we know of quantum matter. A virtue of our interpretation-independent formulation of the measurement problem is that it allows to identify \textit{a posteriori} the common question that each interpretation actually attempted to address, and thereby to show them under a new light. In full generality, \textit{one could define an interpretation of QM as an attitude towards non-Kolmogorovianity}. Below is a snapshot of the main attitudes that have been proposed so far.

\begin{itemize}
\item \textbf{Copenhagen.} Physics is first and foremost an activity of real humans in actual labs with virtually infinite resource: a quantum state $\ket{\psi}$ does not make sense out of such a context. One must remain agnostic about the ‘true nature of the electron’, hence the latter does not need to be Kolmogorovian.

\item \textbf{Collapse models.} A modified Schrödinger dynamics causes an objective collapse (here a genuine physical process, not an epistemic update) for large systems, which become Kolmogorovian in the preferred basis of position. Before collapse, the stance on non-Kolmogorovianity may vary from one author to the other.

\item \textbf{Bohmian mechanics.} Non-Kolmogorovianity is repressed: Bohmian mechanics (BM) is the ultimate attempt to save the God’s-eye view. The ontology reduces to a collection of particles with definite positions and velocities. Despite absolute trajectories, though, the non-Kolmogorovian quantum statistics are successfully reproduced (see remark \ref{rk:BM} below).

\item \textbf{Many-worlds.} As an attempt to save a meta-God’s-eye view, one postulates that, ontologically, the universe \emph{is} $\ket{\psi}$. Non-Kolmogorovianity arises because all potentialities do happen and contribute to the statistics. Decoherence produces branches that satisfy the LTP.

\item \textbf{QBism.} There is no objective, outside world, but only a myriad of 1st person point of views. Each agent has their own $\ket{\psi}$, merely encoding their degrees of belief, and non-Kolmogorovianity becomes unproblematic as soon as measurement outcomes cease to be considered as objective features of the world.

\item \textbf{Relational quantum mechanics.} Any physical interaction between two systems may establish a fact for them. However, a linguistic effort is required: one mustn't talk about the value of a variable \emph{per se}, but only \emph{relative} to another system which has decohered it. So, whenever the LTP is violated, the conditioning was made on something which is not a fact for anyone: this is not permitted.
\end{itemize}

\begin{remark} \label{rk:BM}
\textbf{Does Bohmian mechanics contradict our argument?} 
Because it reproduces the quantum statistics, BM is certainly not a Kolmogorovian theory. The fact that it involves a probability distribution over the positions of particules $x(t)$, and that the velocities are defined as $v(t) = \dot{x}(t)$, may give the feeling that BM has the characteristics of a Kolmogorovian model as we defined it in section \ref{sec:kolmogorov}. This is not correct, though: in BM, only the position variable (and functions thereof) is a genuine physically meaningful random variable. Studying the random variable $p(t) = mv(t)$ fails to yield the correct probabilities for a momentum measurement. To predict these, it is instead necessary to include a full description of the measurement apparatus' dynamics, in which the measurement outcome is ultimately encoded, as everything else, in terms of positions of particles (in this case, in the position of the apparatus' pointer). Consequently, there exists no random variable describing momentum in BM, nor spin.
For the same reason, although uncertainty in BM might at first glance seem purely epistemic, the theory actually rejects \textsc{(EU)}, in accordance with our claim of section \ref{sec:philo}, because momentum or spin do not have any value at all: it is simply meaningless to talk about them. Let's mention in passing that the situation is quite similar for \cite{spekkens2007evidence}'s toy model: it is a non-Kolmogorovian theory (it reproduces \textit{e.g.}~quantum interference, hence violates the LTP) that rejects \textsc{(EU)} since measurable quantities do not have a pre-existing value in general (the ontic state merely dictates their probabilities). The same can be said of \cite{barandes2025stochastic} indivisible stochastic approach: as a reformulation of QM, it is non-Kolmogorovian; and because the configuration space corresponds in the Hilbert space to the vectors of a preferred-basis, only one set of commuting observables is assigned well-definite values, hence \textsc{(EU)} is not satisfied.

In defense of BM, one may insist and argue that the theory proposes an ontology in which \emph{only} position exists, so it could seem unfair to demand it to provide random variables with well-defined underlying values for the other variables as well. Leaving aside the conspiratorial flavour of such an ontology\footnote{Considering that we are able to build such a wide variety of measurement devices probing plenty of different physical quantities which exhibit strong regularities, and among which position does not dramatically stand out, is it reasonable to believe that all these other variables are in fact mere hallucinations caused by patterns in the behaviour of the positions of the devices' particles?}, even the epistemic nature of the position uncertainty is questionable in BM. Indeed, one can ask why, in spite of being a deterministic theory, it will ultimately never have more predictive power than QM \citep{colbeck2011no}. It is because perfect knowledge (or simply more knowledge than what is given by the Born rule $\lvert \psi(x) \rvert^2$) of initial conditions of Bohmian particles is in principle inaccessible, a fact known as the quantum equilibrium hypothesis (QEH), necessary to ensure no-signalling \citep{valentini2002signal}. The problem is that such a hypothesis is not easily justified. Heisenberg microscope-type arguments may well qualitatively explain an irreducible epistemic uncertainty, but they fail to quantitatively reproduce the quantum uncertainties where the $\hat{X}$ and $\hat{P}$ distributions are precisely related by a Fourier transform \citep{brown1981critique}. Moreover, although \cite{durr1992quantum} have shown that the uncertainty tends to converge to the QEH, it remains unclear why it is in principle impossible to reduce one's epistemic uncertainty by more than $\lvert \psi(x) \rvert^2$, be it for a short time, be it in a single non-typical experiment. Consequently, considering how this irreducible Bohmian uncertainty seems to need to be imposed by \textit{fiat}, one may wonder to what extent it can accurately be deemed ‘epistemic’. Interestingly, \cite{durr1992quantum} themselves call it ‘absolute uncertainty’. Furthermore, BM's trajectories have a gauge indeterminacy because the probability current $\vec{j}$ guiding the particles is only defined up to a divergence-free field \citep{fankhauser2025undetectability}, since it is constrained solely by the equation $\partial_t \lvert \psi \rvert^2 + \mathrm{div}(\vec{j}) = 0$. Although this gauge under-determinacy makes no difference at the level of the predictions, BM still does not tell us which trajectory is the actual one, hence it does not really let us see what God's eye sees.

\end{remark}

\subsection{Why is quantum gravity so hard?} \label{sec:QG}

In addition to its interpretational benefit, the proposed formulation of the measurement problem might also be fruitful in the context of quantum gravity (QG). Indeed, as we shall argue now, \emph{formulating why QM is so weird helps understanding why QG is so hard}. It turns out that it is not easy to exhibit a compelling reason for why QM and GR are so incompatible, without relying on purely technical, mathematical arguments (\textit{e.g.}~non-renormalisability) and without presupposing the form of the sought theory (\textit{e.g.}~quantized metric field, superpositions of spacetimes...).  Let's therefore start with this astonishing observation: how come Einstein was able to formulate a backgroundless theory of classical matter as soon as 1915, and no one has been able to do so for quantum matter for more than a century? 

Einstein's theory of relativity is deemed relativistic because lengths and durations are relative to the observer. However, his ‘relativistic’ spacetime is still very absolute. Mind that one possible formulation of Einstein's principle of relativity is the requirement that the laws of physics should take the same form in all admissible reference frames: shouldn't it rather be called the principle of absoluteness? In fact, \cite{letterEinsteinZshimmer} himself was unsatisfied with the name ‘relativity’ and would have preferred ‘invariance theory’. This flavour of absoluteness can be traced back in the mathematics of the theory: in differential geometry, although the charts account for the 1st person perspectives, \emph{the manifold inevitably pictures a God's-eye view}. In essence, a manifold is an absolute structure: true for everyone, containing all the geometrical information possessed by the different 1st person perspectives, and from which any of them can be derived—something that quantum matter can not admit, as we have seen. Now, as long as they remain decoupled, an absolute spacetime can very well host some quantum matter, as is the case in quantum field theory. However, in a theory like GR where spacetime and matter co-emerge, and where they get equated \textit{via} Einstein equation, the absoluteness of the former would clash with the non-absoluteness of the latter, if it were quantum. Here is, we propose, the fundamental tension between QM and GR.

The challenge of QG could then be: which mathematics can supersede differential geometry, and describe spacetime without allowing for any God's-eye view? Which mathematical structure could account for the non-Kolmogorovianity of the spacetime variables? Superposition of classical spacetimes as vectors in a Hilbert space might be a way to do so, but a very restrictive one. In particular, it only allows for a global violation of the LTP at the level of the entire spacetime, but not for a locally-dependent violation. 

It could be argued that fixing a bare topological manifold almost doesn't fix anything geometrically, and that the non-absoluteness of spacetime's geometry could be accounted for by simply upgrading the metric field to an operator field, as is done in standard approaches to quantize GR. However, such a structure already presupposes \emph{the absoluteness of the notion of event}, namely the fact that there exists an objective collection of points, common to all, each of which can be referred to by everyone, and where something happens. QG invites us to question this: when is it admissible to say that something actually does happen, and where, and who can say so? This is precisely the kind of questions approached by the field of quantum causality \citep{brukner2014quantum, goswami2020experiments}, for instance with the study of indefinite causal orders, process matrices or the quantum switch. 

\section{Conclusion}

Here is what quantum mechanics (QM) tells us about the world: the universe is not Kolmogorovian. There is no way to ‘solve the measurement problem’ simply because it is an empirical fact. This poorly-named problem is neither a physical nor a mathematical problem, but it motivates a philosophical one: the task of building an ontology compatible with the experimental violation of Kolmogorov axioms. Like any other philosophical question, this does not admit a definite, universal solution; in fact, there are as many answers as there are interpretations of QM. On the one hand, none of them succeed to solve the measurement problem, in the sense that they do not change how the empirical statistics behave; on the other hand, all of them succeed in proposing an interesting attitude towards non-Kolmogorovianity.

While the measurement problem does not have a universal solution, it \emph{does} however admit a definition, and formulating it in an interpretation-independent way could be fruitful for physics. Once the common root of the puzzle has been identified, one can hope for a better dialogue between the different schools of interpretations of QM. It might also guide the search for quantum gravity, by highlighting the deep source of tension between QM and general relativity. 

The God’s‑eye view has been one of the pillars of modern physics since the scientific revolution \citep{daston2021objectivity}. It lies at the heart of classical physics, but also of Einstein’s relativity, where the spacetime manifold is seen from nowhere. It is the lesson of the measurement problem, and the challenge of quantum gravity, to finally get rid of it. The quest against absoluteness is still ongoing.

\section*{Acknowledgements}

During the course of this 5-years long project, countless discussions with colleagues, multiple feedbacks received from re-readers, and sharp questions received in seminars and conferences were the main driving forces to gradually improve the quality of this article over the successive rewritings. The author would like to thank in particular Emily Adlam, Luis C.~Barbado, Veronika Baumann, Časlav Brukner, Lin-Qing Chen, Dmitry Chernyak, Michael Cuffaro, Maxime Devin--Pinson, Andrea Di Biagio, Gyözö Egri, Johannes Fankhauser, Guilherme Franzmann, Simon Fuchs, Alexei Grinbaum, Sebastian Horvat, Charlène Laffond, Baptiste Le Bihan, Niels Linnemann, Antoine Pinochet--Lobos, Carlo Rovelli, Alberto Spalvieri, László E.~Szabó, Anderson A.~Tomaz, Lev Vaidman, David Wallace, Christian Wüthrich and his students, as well as Anton Zeilinger.

This research was funded by the Austrian Science Fund (FWF) [10.55776/F71] and [10.55776/COE1], and by BeyondC [10.55776/RG3] (Research Group 3). For open access purposes, the authors have applied a CC BY public copyright license to any manuscript version arising from this submission. This publication was made possible through the financial support of the WOST (WithOutSpaceTime) grant from the John Templeton Foundation. The opinions expressed in this publication are those of the authors and do not necessarily reflect the views of the John Templeton Foundation.

\begin{appendices}
\section{Appendix}
\subsection{Proof of ‘Boole's theorem’} \label{section:proof Boole}
We present here the proof of the theorem introduced in \ref{sec:Boole vs Bell} and somewhat cheekily attributed to Boole.

\begin{theorem*}[‘Boole’]
Let $A_1, A_2, B_1, B_2$ be four $\pm1$-valued random variables. Then:
\[ \lvert \langle A_1 B_1 \rangle + \langle A_1 B_2 \rangle + \langle A_2 B_1 \rangle - \langle A_2 B_2 \rangle \rvert \leq 2 \]
\end{theorem*}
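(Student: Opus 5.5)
The plan is the standard pointwise argument, exploiting the fact that all four random variables live on the same probability space $(\Omega,\mathcal{T},p)$. That common space is exactly what Kolmogorovianity provides, and it is the only hypothesis used.

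First, I will define the single random variable
\[ S(\omega) = A_1(\omega)B_1(\omega) + A_1(\omega)B_2(\omega) + A_2(\omega)B_1(\omega) - A_2(\omega)B_2(\omega), \]
which is measurable since it is built from measurable functions by products and sums. By linearity of expectation with respect to $p$, the left-hand side equals $\lvert \langle S \rangle \rvert$. Each $A_iB_j$ takes values in $\{-1,1\}$, so all expectations exist.

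The key step is a pointwise bound. For each fixed $\omega$, I will factor
\[ S(\omega) = A_1(\omega)\big(B_1(\omega)+B_2(\omega)\big) + A_2(\omega)\big(B_1(\omega)-B_2(\omega)\big). \]
Since $B_1(\omega),B_2(\omega)\in\{\pm1\}$, one of the two brackets is $0$ and the other is $\pm2$. Because $\lvert A_i(\omega)\rvert = 1$, this gives $S(\omega)\in\{-2,2\}$, and in particular $\lvert S(\omega)\rvert \leq 2$ for every $\omega\in\Omega$.

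Finally, integrating gives $\lvert \langle S\rangle \rvert \leq \langle \lvert S\rvert \rangle \leq 2\,p(\Omega) = 2$, using monotonicity of the integral and Axiom 1.

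There is no real obstacle here. The only point deserving emphasis is that the four correlators are computed with the same measure $p$ on a common $\Omega$. This is what allows combining them into one pointwise quantity $S(\omega)$, and it is precisely the step that fails for non-Kolmogorovian statistics, where each pair $(A_i,B_j)$ may only have its own separate joint distribution.
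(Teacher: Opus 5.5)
Your proof is correct and follows essentially the same route as the paper: you factor the sum as $A_1(B_1+B_2)+A_2(B_1-B_2)$, bound it pointwise by $2$ using $\lvert A_i\rvert=1$, and integrate against $p$. The only cosmetic difference is that you note directly that one bracket vanishes while the other equals $\pm 2$, whereas the paper invokes the inequality $\lvert x+y\rvert+\lvert x-y\rvert\leq 2\max(\lvert x\rvert,\lvert y\rvert)$.
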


\begin{proof}
Notice that $\lvert A_1 \rvert = \lvert A_2 \rvert = \lvert B_1 \rvert = \lvert B_2 \rvert = 1$, and compute:
\begin{align*}
\lvert \langle A_1 B_1 \rangle + \langle A_1 B_2 \rangle + \langle A_2 B_1 \rangle - \langle A_2 B_2 \rangle \rvert 
=& \lvert \langle A_1 (B_1 +B_2) \rangle + \langle A_2 (B_1 - B_2) \rangle \rvert \\
\leq& \Big\langle \lvert A_1 \rvert \lvert B_1 +B_2 \rvert \Big\rangle + \Big\langle \lvert A_2 \rvert \lvert B_1 - B_2 \rvert \Big\rangle \\
\leq& \Big\langle \lvert B_1 +B_2 \rvert + \lvert B_1 - B_2 \rvert  \Big\rangle \\
\leq& 2 \Big\langle \max( \lvert B_1 \rvert , \lvert B_2 \rvert  ) \Big\rangle \\
\leq& 2,
\end{align*}
where we have used the fact that $\forall x,y \in \mathbb{R},\; \lvert x+y \rvert + \lvert x-y \rvert \leq 2\max(\lvert x \rvert, \lvert y \rvert)$. 
\end{proof}
\end{appendices}

\newpage
\bibliographystyle{plainnat}
\bibliography{Biblio_interpretation}
\end{document}